\newcommand{\E}{{\mathbb E}}
\newcommand{\Py}{{\mathbb P}}
\newcommand{\s}[2]{{#1}^{(#2)}}
\newcommand{\hp}{{\hat{p}}}
\newcommand{\dir}{{\mathscr{D}}}
\DeclareMathOperator*{\argmax}{arg\,max}
\title[Bayesian UCB for Estimating Distributions]{Adaptive Sampling for Estimating Distributions: A Bayesian Upper Confidence Bound Approach}
\author{%
 \Name{Dhruva Kartik} \Email{mokhasun@usc.edu}\\
 \Name{Neeraj Sood} \Email{nsood@usc.edu}\\
  \Name{Urbashi Mitra} \Email{ubli@usc.edu}\\
 \addr University of Southern California, Los Angeles
 \AND
 \Name{Tara Javidi} \Email{tjavidi@ucsd.edu}\\
 \addr University of California, San Diego%
}
\begin{document}

\maketitle

\begin{abstract}%
The problem of adaptive sampling for estimating probability mass functions (pmf) uniformly well is considered. Performance of the sampling strategy is measured in terms of the worst-case mean squared error. A Bayesian variant of the existing upper confidence bound (UCB) based approaches is proposed. It is shown analytically that the performance of this Bayesian variant is no worse than the existing approaches. The posterior distribution on the pmfs in the Bayesian setting allows for a tighter computation of upper confidence bounds which leads to significant performance gains in practice. Using this approach, adaptive sampling protocols are proposed for estimating SARS-CoV-2 seroprevalence in various groups such as location and ethnicity. The effectiveness of this strategy is discussed using data obtained from a seroprevalence survey in Los Angeles county.
\end{abstract}

\begin{keywords}%
 Adaptive sampling, Distribution estimation, Bayesian methods, Upper Confidence Bound%
\end{keywords}

\section{Introduction}
We frequently encounter scenarios where we need to estimate a finite collection of pmfs from their samples. We have a limited sample budget and we would like to adaptively choose which pmf to sample from so that we have uniformly good estimates of all the pmfs. The \emph{goodness} of our estimate can be evaluated using various distribution distance metrics. In this paper, we will focus on the mean squared error. A concrete application that fits this setting is that of estimating SARS-CoV-2 incidence rates in various geographic regions. Incidence rate, the fraction of \emph{currently} infected individuals, is a crucial metric for assessing the safety of a given region. Incidence rate cannot be estimated in an unbiased way simply based on the reported cases, it needs to be estimated by conducting randomized tests in each region of interest. Typically, such randomized surveys can be conducted by a central authority which has a limited testing budget. A natural question that arises in this scenario is that of allocating the available tests to various regions in an efficient manner. We may also be interested in other parameters such as SARS-CoV-2 seroprevalence which is an indicator of an individual's potential immunity against the virus, and other kinds of groups like ethnicity, age etc. in addition to the geographic location. Other applications of this setting include dynamics estimation in Markov Decision Processes (MDPs) and text compression which are discussed in greater detail in \cite{shekhar2020adaptive}.

When we are interested in minimizing the mean squared error for every pmf, an efficient sampling strategy would be to let the number of tests be proportional to the \emph{variance} of the associated pmf. However, the key problem is that we do not know the variance of the pmf. Therefore, we need to adapt our allocation as we learn the distribution over time. An upper confidence bound (UCB) based approached was proposed in \cite{shekhar2020adaptive}. The main idea in this approach is to first form an upper bound on the variance of each pmf. As a function of this upper bound and the number of samples acquired so far for each pmf, the next sample is chosen accordingly. It is shown that the number of samples obtained in this manner would approximately be the same as that of an \emph{oracle} that allocates samples using the knowledge of the variance of the underlying distributions. While this strategy satisfies many such interesting theoretical properties, we observed that it tends be more conservative than necessary in practice.

Our main goal in this paper is to devise a less conservative approach while ensuring the same performance guarantees as in \cite{shekhar2020adaptive}. Our contributions towards this goal are (i) we assume a Dirichlet prior on the pmfs and compute aforementioned upper confidence bounds using the posterior belief on the pmfs. This, combined with a stronger set of inequalities gives us noticeably tighter bounds on the variance of the pmfs; (ii) we analytically prove that this Bayesian approach for computing confidence bounds is no worse than the one in \cite{shekhar2020adaptive}; and (iii) we employ these methods for estimating SARS-CoV-2 seroprevalence in various categories using data obtained by a randomized survey \cite{sood2020seroprevalence} conducted in Los Angeles county. Our analysis of the survey data illustrates the importance of adaptive sampling in order to obtain better estimates. We would like to emphasize that the Bayesian approach used herein to compute upper confidence bounds and the associated proof methodologies can also be adapted to other UCB based approaches that are widely used in online learning \cite{efroni2020exploration,rosenberg2019online} and dynamics estimation \cite{tarbouriech2019active}.

\paragraph*{Related Work} Our work is closely related to \cite{shekhar2020adaptive}. We adopt their sampling strategy and many of the proof methodologies. The key difference between our work and \cite{shekhar2020adaptive} is our Bayesian approach for computing the upper confidence bounds on the variances. Our approach for computing these bounds has a significant impact on the performance in practice, especially in the non-asymptotic regime. Multiple distance metrics have been considered in \cite{shekhar2020adaptive}, whereas we restrict our attention to the mean squared error. It is however possible to extend our methodology to the other distance metrics as well. When the pmfs of interest have a binary support, which is the case in the problem of seroprevalence estimation, then the problem of estimating distributions reduces to the problem of estimating means of random variables. Substantial amount of work has been done in the area of estimating the means of random variables \cite{carpentier2011upper}. In \cite{carpentier2011upper}, two approaches for computing upper confidence bounds were proposed, one using the Chernoff-Hoeffding bound and the other using the empirical Bernstein bound. The Chernoff-Hoeffding bound approach is very close to that of \cite{shekhar2020adaptive}. We observe that the performance of our Bayesian approach is better than both the approaches in \cite{carpentier2011upper}. However, the regret bounds in \cite{carpentier2011upper} for the Bernstein approach are tighter than ours.

\paragraph*{Notation}\label{notation}
In general, subscripts denote time indices unless stated otherwise. For a sampling strategy $g$ and a collection of pmfs $p$, we use $\Py^g_p[\cdot]$ and $\E^g_p[\cdot]$ to indicate respectively that the probability and expectation depend on the choice of $g$, and that they are conditioned on the model $p$. We denote the indicator function associated with an event $\mathcal{E}$ with $\mathbbm{1}_{\mathcal{E}}$.

\section{Problem Formulation}\label{sec:probform}
We are interested in estimating $K$ probability mass functions (pmfs) over a finite space $\{1,\dots,L\}$ from their samples. Let the $k$-th pmf be denoted by $\s{p}{k} \doteq \left(\s{p}{k,1},\dots, \s{p}{k,L}\right)$ and let $p \doteq \left[\s{p}{1},\dots,\s{p}{K}\right]$. For convenience, we will refer to each of these pmfs as an \emph{arm}.

The total number of samples that can be obtained is $N$. At any given time, we can adaptively choose an arm to obtain a sample. The $n$-th arm sampled is denoted by $U_n$. Let the outcome of the $n$-th sample be denoted by $Y_n$. We have $Y_n \sim \s{p}{U_n}.$
Let the history of sampled arms and their corresponding outcomes be denoted by $I_n \doteq \{U_1,Y_1,\dots,U_{n-1},Y_{n-1}\}$. Let the strategy used to collect samples at time $n$ be $g_n$. In other words, $U_n = g_n(I_n)$. After collecting $n$ samples, we form an estimate $\s{\hp}{k}_n$ of each pmf $\s{p}{k}$, which in our case, is the empirical distribution. The mean squared error between the true distribution $\s{p}{k}$ and the estimate is denoted by 
\begin{align}
    \s{\mathcal{L}}{k}_N(\s{p}{k},g) = \E_p^g\left[\mathrm{MSE}\left(\s{p}{k},\s{\hp}{k}_{N}\right)\right].
\end{align}
\begin{remark}
Note that we can scale the loss $\s{\mathcal{L}}{k}$ by an appropriate weight $\s{w}{k}$. Our sampling strategy and proofs can be easily adapted to this weighted loss.
\end{remark}
As discussed earlier, our sampling strategy will involve the computation of upper confidence bounds on certain parameters. When Bayesian methods are used to derive upper confidence bounds, it is generally not possible to provide performance guarantees for every instance $p$ of the underlying pmfs. In such cases, the loss $\mathcal{L}$ is \emph{locally averaged} over a small region containing the true pmf $p$. We will therefore consider a loss that is locally averaged with respect to $p$ using a distribution $\varrho$. We refer to \cite{brown2001interval,bayarri2004interplay} for a detailed discussion on the interpretation of local averaging. 
\begin{remark}
Note that the distribution $\varrho$ used for local averaging is not the same as the prior on $p$. We will use a Bayesian interpretation of the pmfs $p$ only to compute tighter upper confidence bounds. However, our formulation and performance guarantees are frequentist in nature since the distribution $\varrho$ is supported on an arbitrarily small region around $p$.
\end{remark}

Let $\Delta$ denote the $L-1$ dimensional simplex and let $\varrho = \s{\varrho}{1}\times\dots\times\s{\varrho}{K}$ be a product distribution on $\Delta^K$.
Our goal is to minimize the following cost associated with a sampling strategy $g$ which is given by
\begin{align}
    \label{objective}J_N(g, \varrho) \doteq \max_k\left\{\E^{{\varrho}}\left[\s{\mathcal{L}}{k}_N(\s{\pi}{k},g)\right]\right\},
\end{align}
where ${\pi} \sim {\varrho}$. The distribution $\s{\varrho}{k}$ has the following form. For any Borel measurable set $\mathcal{A} \subseteq \Delta$,
\begin{align}
    \label{varrhodef}\s{\varrho}{k}(\mathcal{A}) = \mu\left(\mathcal{A}\cap\eta(\s{p}{k})\right)/\eta,
\end{align}
where $\mu$ is the uniform distribution over $\Delta$ and $\eta(\s{p}{k})$ is an $\ell_2$-ball around $\s{p}{k}$, such that $\mu\left(\eta(\s{p}{k})\right) = \eta$. Clearly, the cost in \eqref{objective} approaches the cost in \cite{shekhar2020adaptive,carpentier2011upper} when $\eta \rightarrow 0$ and thus, it is desirable to select small $\eta$. However, the performance guarantees we provide in this paper become weaker as $\eta$ becomes small. We will discuss the effects of the choice of $\eta$ in subsequent sections.

\paragraph*{The Oracle}
Let there be an oracle that knows the probability distributions $p$. Using this information, the oracle allocates a fixed number of samples for each arm. If the number of samples allocated to arm $k$ is $\s{T}{k}$, the mean-squared error associated with arm $k$'s estimate is given by
\begin{align}
    \E_p^g\left[\mathrm{MSE}\left(\s{p}{k},\s{\hp}{k}_{N}\right)\right] &= \frac{\sum_{l}\s{p}{k,l}(1-\s{p}{k,l})}{\s{T}{k}} \\
    &\doteq \frac{\s{c}{k}}{\s{T}{k}} \doteq \varphi(\s{c}{k},\s{T}{k}),
\end{align}
where $\varphi$ is referred to as the \emph{tracking function} and $\s{c}{k}$ is referred to as the \emph{tracking parameter} \cite{shekhar2020adaptive}. The oracle solves the following optimization problem to obtain an allocation.

\begin{align}
&\min_{\s{T}{1},\dots,\s{T}{K}} \;  \max_k \varphi(\s{c}{k},\s{T}{k})\tag{P1}\label{p}
&\mathrm{s.t.} \;  \sum_{k}\s{T}{k} = N.
\end{align}
While Problem \eqref{p} is a combinatorial optimization problem, an approximate solution can easily be obtained by making a convex relation. The non-integer solution and the optimal value for this problem can be expressed in closed form as
\begin{align}
    &\s{T}{k}_* = \frac{\s{c}{k}N}{\sum_{i}\s{c}{i}}; & \varphi^*(p,N) = \frac{\sum\s{c}{k}}{N}.
\end{align}
The regret with respect to the oracle that knows the underlying distributions is
\begin{align}
    \mathcal{R}_N(g,\varrho) \doteq J_N(g,\varrho) - \E^{{\varrho}}\left[\varphi^*(p,N)\right].
\end{align}

\section{Definitions and Framework}
We will first define some important quantities and compute an upper confidence bound on the tracking parameter $\s{c}{k}$. This bound will be used in our sampling strategy described in Section \ref{stratsec}.
\subsection{Prior Belief}
Let us assume that each of the pmfs $\s{p}{k}$ is independently drawn from a Dirichlet distribution \cite{kotz2004continuous} with parameters $\s{\alpha}{k}_1 \doteq \left(\s{\alpha}{k,1}_1,\dots,\s{\alpha}{k,L}_1\right)$.
In other words,
\begin{align}
    (\s{p}{1},\dots,\s{p}{K}) \sim \label{prior}\dir(\s{\alpha}{1}_1)\times\dots\times\dir(\s{\alpha}{K}_1).
\end{align}
This prior distribution\footnote{Note that this is distinct from the distribution $\varrho$ that was used for local averaging.} will be denoted by $\rho_1$. We will use the uniform prior which is equivalent to setting $\s{\alpha}{k,l}_1 = 1$ for every $k,l$. Thus, $\rho_1$ is the same as the distribution $\mu$ used in \eqref{varrhodef}.
\begin{remark}
Note that the prior $\rho_1$ is just something we assume. In a frequentist setting with local averaging, we are not provided with any prior information on the pmfs $p$. Since the distribution $\varrho$ is defined using the uniform distribution $\mu$ in \eqref{varrhodef}, the uniform prior $\rho_1 = \mu$ happens to be a convenient choice for computing our confidence bounds and subsequently analyzing them. In some cases however, we do have some prior information on $p$. A discussion on how to incorporate certain kinds of priors is in Appendix \ref{priorappend}.
\end{remark}

\subsection{Posterior Update}
When the prior on the pmfs is a factored Dirichlet distribution as in \eqref{prior}, the posterior distribution on the pmfs after collecting $n$ samples is $\prod_{k}\dir(\s{\alpha}{k}_{n+1})$ \cite{kotz2004continuous}, where
\begin{align}
    \label{postupdate}\s{\alpha}{k,l}_{n+1} &= \s{\alpha}{k,l}_{n} + \mathbbm{1}_{(U_n = k)}\mathbbm{1}_{(Y_n = l)}.
\end{align}
Further, since the pmf $\s{p}{k}$ is Dirichlet distributed, the probability $\s{p}{k,l}$ is distributed according to the Beta distribution. More precisely, conditioned on information $I_n$, we have
\begin{align}
    \label{betadis}\s{p}{k,l} \sim \mathrm{Beta}\left(\s{\alpha}{k,l}_n,\s{\alpha}{k,0}_n-\s{\alpha}{k,l}_n\right),
\end{align}
where $\s{\alpha}{k,0}_n \doteq \sum_{l=1}^L\s{\alpha}{k,l}_n$.
\subsection{Upper Confidence Bounds}
Let the number of times arm $k$ has been sampled until time $n$ be
\begin{align}
    \s{T}{k}_n = \sum_{\tau = 1}^{n-1}\s{U}{k}_\tau.
\end{align}
 Based on the posterior distribution of $\s{p}{k,l}$, we can compute a lower bound $\s{a}{k,l}_{n}$ and an upper bound $\s{b}{k,l}_{n}$ on $\s{p}{k,l}$ such that
\begin{align}
    \label{upp}\Py^{\rho_1}[\s{p}{k,l} > \s{b}{k,l}_{n} \mid I_n] &\leq \delta_n/2\\
    \label{low}\Py^{\rho_1}[\s{p}{k,l} < \s{a}{k,l}_{n} \mid I_n] &\leq \delta_n/2,
\end{align}
where $\delta_n = \frac{\delta}{KLn(1+\log N)}$ and $\delta = \eta N^{-5/2}$. Let us define an interval $\s{E}{k,l}_n \doteq [\s{a}{k,l}_{n},\s{b}{k,l}_{n}] \cap \s{E}{k,l}_{n-1}$. This ensures that the interval $\s{E}{k,l}_n$ is non-increasing in $n$. Note that we can compute this interval based on the inverse cdf of the posterior distribution, which is a Beta distribution.

For arm $k$, let $\s{q}{k}_n$ and $\s{u}{k}_n$ respectively be the optimal solution and the optimum value of the following quadratic program
\begin{align}
\label{ukupp}&\max_{q \in \Delta} \; \sum_l q_l(1-q_l) &\mathrm{s.t.} \; q_l \in \s{E}{k,l}_n\; \forall l.
\end{align}
Notice that since $\s{E}{k,l}_n$ is non-increasing in $n$, the upper bound $\s{u}{k}_n$ is also non-increasing in $n$. Let us define the event $\mathcal{E}$ as
\begin{align}
    \mathcal{E} \doteq \left\{\s{p}{k,l} \in \s{E}{k,l}_n\; \forall k,l,n\right\}.
\end{align}
Note that because of the way the upper bound $\s{u}{k,l}_n$ is defined in \eqref{ukupp}, we have $\s{p}{k,l}(1-\s{p}{k,l}) \leq \s{u}{k,l}_n$ under the event $\mathcal{E}$. Further, we have $\Py^{\rho_1}[\mathcal{E}] \geq 1-\delta$ because of the union bound $\sum_{n}\delta_n \leq \delta$.

\section{Strategies and Performance Bounds}\label{stratsec}

The main idea is simply to sample the arm that has maximum $\varphi(\s{u}{k}_n,\s{T}{k}_n)$ at time $n$. The precise strategy is stated as Algorithm \ref{algbasic}.


\begin{algorithm}

\DontPrintSemicolon
\caption{Bayesian UCB}
\label{algbasic}
\For{n = 1 \to N}{
\For{k = 1 \to K}{
Compute $\s{u}{k}_n$\;
}
Assign $U_n = \argmax_k \varphi(\s{u}{k}_n,\s{T}{k}_n)$\\
Sample arm $U_n$ to obtain $Y_n$ and update the posterior according to \eqref{postupdate}\;
}
\KwRet{$\s{\hp}{k}_{N}$}

\end{algorithm}

\subsection{Regret Bounds}
Using Algorithm \ref{algbasic}, we can achieve nearly the same mean squared error as the oracle would have achieved with the knowledge of underlying distributions. The gap between the oracle and Algorithm \ref{algbasic} is characterized by the regret bound below.


\begin{theorem}\label{mainthm}
Using the sampling strategy in Algorithm \ref{algbasic}, we have
\begin{align}
    \mathcal{R}_N(g,\varrho) \leq \mathcal{O} \left(\frac{\ln(\eta)}{N^{\frac{3}{2}}}\right).
\end{align}
\end{theorem}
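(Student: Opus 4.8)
The plan is to follow the standard UCB regret decomposition, adapted to the "locally averaged" Bayesian setting. The key observation is that on the high-probability event $\mathcal{E}$ (which has $\Py^{\rho_1}[\mathcal{E}] \geq 1 - \delta$), the optimistic estimate $\s{u}{k}_n$ dominates the true tracking parameter $\s{c}{k}$, while also being non-increasing and converging to $\s{c}{k}$ as the interval $\s{E}{k,l}_n$ shrinks. So the first step is to establish the sandwich $\s{c}{k} \leq \s{u}{k}_n$ on $\mathcal{E}$, together with a quantitative bound on $\s{u}{k}_n - \s{c}{k}$ in terms of the width of the confidence intervals $\s{b}{k,l}_n - \s{a}{k,l}_n$; the latter width is controlled by the Beta-posterior inverse-cdf and scales like $\sqrt{\log(1/\delta_n)/\s{T}{k}_n}$, with $\log(1/\delta_n) = \mathcal{O}(\log N + \log(1/\eta))$ — this is where the $\ln(\eta)$ factor enters.

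Next I would argue that the greedy rule $U_n = \argmax_k \varphi(\s{u}{k}_n, \s{T}{k}_n)$ tracks the oracle allocation $\s{T}{k}_*$: because $\varphi$ is the same tracking function the oracle uses and $\s{u}{k}_n$ overestimates $\s{c}{k}$, a standard pigeonhole/potential argument shows that for each $k$, $\s{T}{k}_N$ is within a lower-order additive term of $\s{T}{k}_* = \s{c}{k} N / \sum_i \s{c}{i}$. More precisely, one shows that whenever arm $k$ is pulled at time $n$, its current value $\varphi(\s{u}{k}_n,\s{T}{k}_n)$ is at least the max over other arms, and then integrates this over the run; combining with the fact that $\s{u}{k}_n \to \s{c}{k}$ at rate $\widetilde{\mathcal{O}}(1/\sqrt{\s{T}{k}_n})$ yields $\max_k \varphi(\s{c}{k}, \s{T}{k}_N) \leq \varphi^*(p,N) + \widetilde{\mathcal{O}}(N^{-3/2})$ on $\mathcal{E}$. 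This is essentially the argument of \cite{shekhar2020adaptive}, and the key point for the paper's thesis is that it goes through verbatim with the tighter Bayesian bounds substituted in, since those bounds still satisfy the two needed properties (validity on $\mathcal{E}$ and the shrinking rate).

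The final step is to assemble the regret bound by splitting on $\mathcal{E}$ and its complement. On $\mathcal{E}$, the per-arm MSE is $\varphi(\s{c}{k}, \s{T}{k}_N) \leq \varphi^*(p,N) + \widetilde{\mathcal{O}}(N^{-3/2})$ as above. On $\mathcal{E}^c$, the MSE is trivially bounded (by $1/4$ or some absolute constant, since each $\s{p}{k,l}(1-\s{p}{k,l}) \leq 1/4$ and at least one sample is eventually allocated — or one uses a crude $\mathcal{O}(1)$ bound on the contribution), and this term is weighted by $\Py^{\rho_1}[\mathcal{E}^c] \leq \delta = \eta N^{-5/2}$, contributing $\mathcal{O}(\eta N^{-5/2}) = \mathcal{O}(N^{-5/2})$, which is negligible. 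Taking the expectation over $\varrho$ (the local averaging), and using that $\varrho$ is absolutely continuous with respect to $\rho_1 = \mu$ with a density bounded by $1/\eta$ on its support — so probabilities under $\varrho$ are at most $1/\eta$ times those under $\rho_1$ — converts the $\delta$-probability bad event into an $\mathcal{O}(\delta/\eta) = \mathcal{O}(N^{-5/2})$ contribution, still negligible against the $\widetilde{\mathcal{O}}(N^{-3/2})$ main term. Collecting terms gives $\mathcal{R}_N(g,\varrho) = \mathcal{O}(\ln(\eta)/N^{3/2})$.

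I expect the main obstacle to be making the tracking argument fully rigorous: controlling the accumulation of the $\s{u}{k}_n - \s{c}{k}$ slack over the entire horizon so that the final bound is $N^{-3/2}$ and not something weaker, and carefully handling the initialization phase (before every arm has been sampled enough for the Beta-posterior confidence intervals to be meaningful). The interplay between the local-averaging density bound $1/\eta$ and the choice $\delta = \eta N^{-5/2}$ is delicate but essentially bookkeeping once the structure is laid out; the genuine work is the potential-function argument showing $\s{T}{k}_N \approx \s{T}{k}_*$ with the right error rate.
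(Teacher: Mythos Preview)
Your proposal is correct and follows essentially the same route as the paper: bound $\s{u}{k}_n - \s{c}{k}$ via the posterior interval width (the paper's Lemma~\ref{l2bound}), invoke the tracking/potential argument of \cite{shekhar2020adaptive} verbatim to get a per-$p$ bound involving $\delta_p \doteq \Py^g[\mathcal{E}^c \mid p]$, and then convert to the locally averaged regret using exactly the change-of-measure you describe, $\E^{\varrho}[\delta_p] \leq \delta/\eta = N^{-5/2}$. The paper's own proof is quite terse---it simply cites \cite{shekhar2020adaptive} for the tracking step and writes out only the change-of-measure inequality \eqref{measurechange}---so your outline is in fact more detailed than what appears in the appendix.
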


Our methodology for deriving the regret bound in Theorem \ref{mainthm} is adapted from that in \cite{shekhar2020adaptive}. There are two key distinctions from the approach in \cite{shekhar2020adaptive}: (i) the gap $\s{e}{k}_n$ between the upper bound $\s{u}{k}_n$ and the true tracking parameter $\s{c}{k}$, and (ii) the use of local averaging with respect to the distribution $\varrho$. These distinctions arise due to the Bayesian nature of the confidence bounds in \eqref{ukupp}. In \cite{shekhar2020adaptive}, the gap $\s{e}{k}_n$ is bounded by a simple closed form expression (see Lemma 3 in \cite{shekhar2020adaptive}) which is then used to derive the regret bound. Characterizing this gap in our Bayesian setting is non-trivial due to the relatively complicated construction of the upper bound in \eqref{ukupp}. In the following two lemmas, we first derive a bound on $\s{e}{k}_n$.

\begin{lemma}\label{subgauss}
The length of the interval $\s{E}{k,l}_n$ satisfies
\begin{align}
    \mathrm{len}\left(\s{E}{k,l}_n\right) \leq \sqrt{\frac{2\ln{\frac{2}{\delta_n}}}{(\s{\alpha}{k,0}_n+1)}} \leq \sqrt{\frac{2\ln{\frac{2}{\delta_n}}}{(\s{T}{k}_n+1)}}.
\end{align}
\end{lemma}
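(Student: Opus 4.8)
The plan is to reduce the statement to a bound on the length of the enclosing interval $[\s{a}{k,l}_{n},\s{b}{k,l}_{n}]$ and then to control that length via the sub-Gaussianity of the posterior Beta distribution. Since $\s{E}{k,l}_n=[\s{a}{k,l}_{n},\s{b}{k,l}_{n}]\cap\s{E}{k,l}_{n-1}\subseteq[\s{a}{k,l}_{n},\s{b}{k,l}_{n}]$, intersecting with the previous interval can only shrink it, so $\mathrm{len}(\s{E}{k,l}_n)\le\s{b}{k,l}_{n}-\s{a}{k,l}_{n}$ and it suffices to bound the latter.

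By \eqref{betadis}, conditioned on $I_n$ we have $\s{p}{k,l}\sim\mathrm{Beta}(\s{\alpha}{k,l}_n,\,\s{\alpha}{k,0}_n-\s{\alpha}{k,l}_n)$, whose two parameters sum to $\s{\alpha}{k,0}_n$. The key ingredient is the sharp fact that a $\mathrm{Beta}(\alpha,\beta)$ random variable is sub-Gaussian with variance proxy $\tfrac{1}{4(\alpha+\beta+1)}$ (Marchal and Arbel). Specializing to $\alpha+\beta=\s{\alpha}{k,0}_n$ and writing $m$ for the posterior mean of $\s{p}{k,l}$, this gives, for every $t\ge 0$, $\Py^{\rho_1}[\s{p}{k,l}>m+t\mid I_n]\le\exp(-2(\s{\alpha}{k,0}_n+1)t^2)$ and symmetrically $\Py^{\rho_1}[\s{p}{k,l}<m-t\mid I_n]\le\exp(-2(\s{\alpha}{k,0}_n+1)t^2)$.

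Choosing $t_n=\sqrt{\ln(2/\delta_n)\,/\,(2(\s{\alpha}{k,0}_n+1))}$ makes both right-hand sides equal to $\delta_n/2$, so $m\pm t_n$ are valid confidence bounds in the sense of \eqref{upp}--\eqref{low}. Since the actual $\s{b}{k,l}_{n},\s{a}{k,l}_{n}$ are the $(1-\delta_n/2)$- and $(\delta_n/2)$-quantiles of this Beta posterior, monotonicity of the cdf forces $\s{b}{k,l}_{n}\le m+t_n$ and $\s{a}{k,l}_{n}\ge m-t_n$, whence $\s{b}{k,l}_{n}-\s{a}{k,l}_{n}\le 2t_n=\sqrt{\tfrac{2\ln(2/\delta_n)}{\s{\alpha}{k,0}_n+1}}$, which is the first inequality. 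For the second, the posterior update \eqref{postupdate} with the uniform prior $\s{\alpha}{k,l}_1=1$ gives, after summing over $l$, $\s{\alpha}{k,0}_n=L+\s{T}{k}_n\ge\s{T}{k}_n$, so $\s{\alpha}{k,0}_n+1\ge\s{T}{k}_n+1$ and the claim follows from monotonicity of $x\mapsto 1/x$.

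The only non-elementary step, and hence the main obstacle, is the sharp sub-Gaussian variance proxy $\tfrac{1}{4(\alpha+\beta+1)}$ for the Beta family: establishing it (or citing it correctly) with the right constant is exactly what delivers the stated bound, including the $+1$ in the denominator; the remaining steps are bookkeeping. If one preferred to avoid that fact, an alternative is to realize the Beta posterior through the underlying Bernoulli sampling and bound the posterior mean by a Hoeffding/Azuma-type argument, but that route is messier and yields a slightly weaker constant.
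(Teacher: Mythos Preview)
Your proposal is correct and follows essentially the same route as the paper: invoke the Marchal--Arbel sub-Gaussian variance proxy $\tfrac{1}{4(\alpha+\beta+1)}$ for the Beta posterior, turn it into two-sided tail bounds at level $\delta_n/2$, and deduce the quantile gap, then use $\s{\alpha}{k,0}_n\ge \s{T}{k}_n$ for the second inequality. Your write-up is in fact slightly more explicit than the paper's (you spell out the intersection step $\s{E}{k,l}_n\subseteq[\s{a}{k,l}_n,\s{b}{k,l}_n]$ and the identity $\s{\alpha}{k,0}_n=L+\s{T}{k}_n$), but the argument is the same.
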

\begin{proof}
See Appendix \ref{subgaussproof}.
\end{proof}
\begin{lemma}\label{l2bound}
Under event $\mathcal{E}$, we have
\begin{align}
    \s{e}{k}_n \doteq \s{u}{k}_n - \s{c}{k} \leq \sqrt{\frac{8\ln{\frac{2}{\delta_n}}}{(\s{T}{k}_n+1)}}.
\end{align}
\end{lemma}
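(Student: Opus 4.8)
The plan is to compare the optimizer of the quadratic program \eqref{ukupp} against the true pmf $\s{p}{k}$ coordinate by coordinate. Under event $\mathcal{E}$ every coordinate $\s{p}{k,l}$ lies in $\s{E}{k,l}_n$, so $\s{p}{k}$ is itself feasible for \eqref{ukupp}; hence $\s{u}{k}_n \geq \sum_l \s{p}{k,l}(1-\s{p}{k,l}) = \s{c}{k}$, i.e.\ $\s{e}{k}_n \geq 0$, and it remains only to upper bound the gap. Let $q \doteq \s{q}{k}_n$ be an optimal solution of \eqref{ukupp} (which exists since the feasible set is compact and, under $\mathcal{E}$, nonempty). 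Using the elementary identity $x(1-x) - y(1-y) = (x-y)(1-x-y)$ termwise, I would write $\s{e}{k}_n = \sum_l (q_l - \s{p}{k,l})\,(1 - q_l - \s{p}{k,l})$.

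The crux is to avoid the naive termwise estimate $|1 - q_l - \s{p}{k,l}| \leq 1$, which would leave a spurious factor of $L$. Instead I would exploit that both $q$ and $\s{p}{k}$ lie in the simplex $\Delta$, so $\sum_l q_l = \sum_l \s{p}{k,l} = 1$; this cancels the constant contribution and gives $\s{e}{k}_n = \sum_l (\s{p}{k,l} - q_l)(\s{p}{k,l} + q_l)$ (equivalently $\sum_l (\s{p}{k,l})^2 - \sum_l q_l^2$). Since under $\mathcal{E}$ both $\s{p}{k,l}$ and $q_l$ belong to the interval $\s{E}{k,l}_n$, we have $|\s{p}{k,l} - q_l| \leq \mathrm{len}(\s{E}{k,l}_n)$, while the factors $\s{p}{k,l} + q_l$ are nonnegative and sum to $2$.

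Putting these together, $\s{e}{k}_n \leq \sum_l \mathrm{len}(\s{E}{k,l}_n)(\s{p}{k,l}+q_l) \leq \big(\max_l \mathrm{len}(\s{E}{k,l}_n)\big)\sum_l(\s{p}{k,l}+q_l) = 2\max_l \mathrm{len}(\s{E}{k,l}_n)$, and Lemma \ref{subgauss} then yields $\s{e}{k}_n \leq 2\sqrt{2\ln(2/\delta_n)/(\s{T}{k}_n+1)} = \sqrt{8\ln(2/\delta_n)/(\s{T}{k}_n+1)}$, which is the claim. The only nontrivial point is recognizing that the simplex constraints must be used to remove the dimension-dependent factor; after that the argument is fully elementary, so I do not anticipate any real obstacle.
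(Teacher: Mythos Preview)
Your argument is correct and is essentially the paper's own proof: both reduce $\s{e}{k}_n$ to $\sum_l\bigl((\s{p}{k,l})^2-(\s{q}{k,l}_n)^2\bigr)$ via the simplex constraint $\sum_l\s{p}{k,l}=\sum_l\s{q}{k,l}_n=1$, factor as $\sum_l(\s{p}{k,l}-\s{q}{k,l}_n)(\s{p}{k,l}+\s{q}{k,l}_n)$, bound the first factor by $\mathrm{len}(\s{E}{k,l}_n)$, and use that the second factors sum to $2$ before invoking Lemma~\ref{subgauss}. Your additional remark that $\s{p}{k}$ is feasible under $\mathcal{E}$, hence $\s{e}{k}_n\geq 0$, is a nice clarification the paper leaves implicit.
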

\begin{proof}
See Appendix \ref{l2boundproof}.
\end{proof}
Note that the bound here is as tight as the one in Lemma 3 of \cite{shekhar2020adaptive}. These lemmas address the first aforementioned distinction from \cite{shekhar2020adaptive}. Using Lemma \ref{l2bound}, we can follow the steps of the proof of Theorem 1 in \cite{shekhar2020adaptive} to complete the proof of Theorem \ref{mainthm}. We refer the reader to Appendix \ref{mainthmproof} for the full proof with details related to local averaging.

\section{Numerical Results}
We consider an experimental setup with two arms ($K=2$) and a binary support ($L = 2$). The pmfs associated with arms 1 and 2 are $[0.99,0.01]$ and $[0.7,0.3]$. The total number of samples $N = 2500$ and the value of $\eta = 1/N$. Figure \ref{regfig} depicts the regret $\mathcal{R}_N$ for the sampling strategy in \cite{shekhar2020adaptive} and our sampling strategy in Algorithm \ref{algbasic}. We will refer to the former strategy as UCB and the latter as Bayesian UCB. There is a clear improvement in performance in terms of regret. We also plot the average number of samples acquired by both strategies in Figure \ref{samplefig}. We observe that the number of samples acquired by our Bayesian UCB approach is closer to the oracle allocation. 

\begin{figure}
     \centering
     \subfigure[][b]{\includegraphics[width=0.32\textwidth]{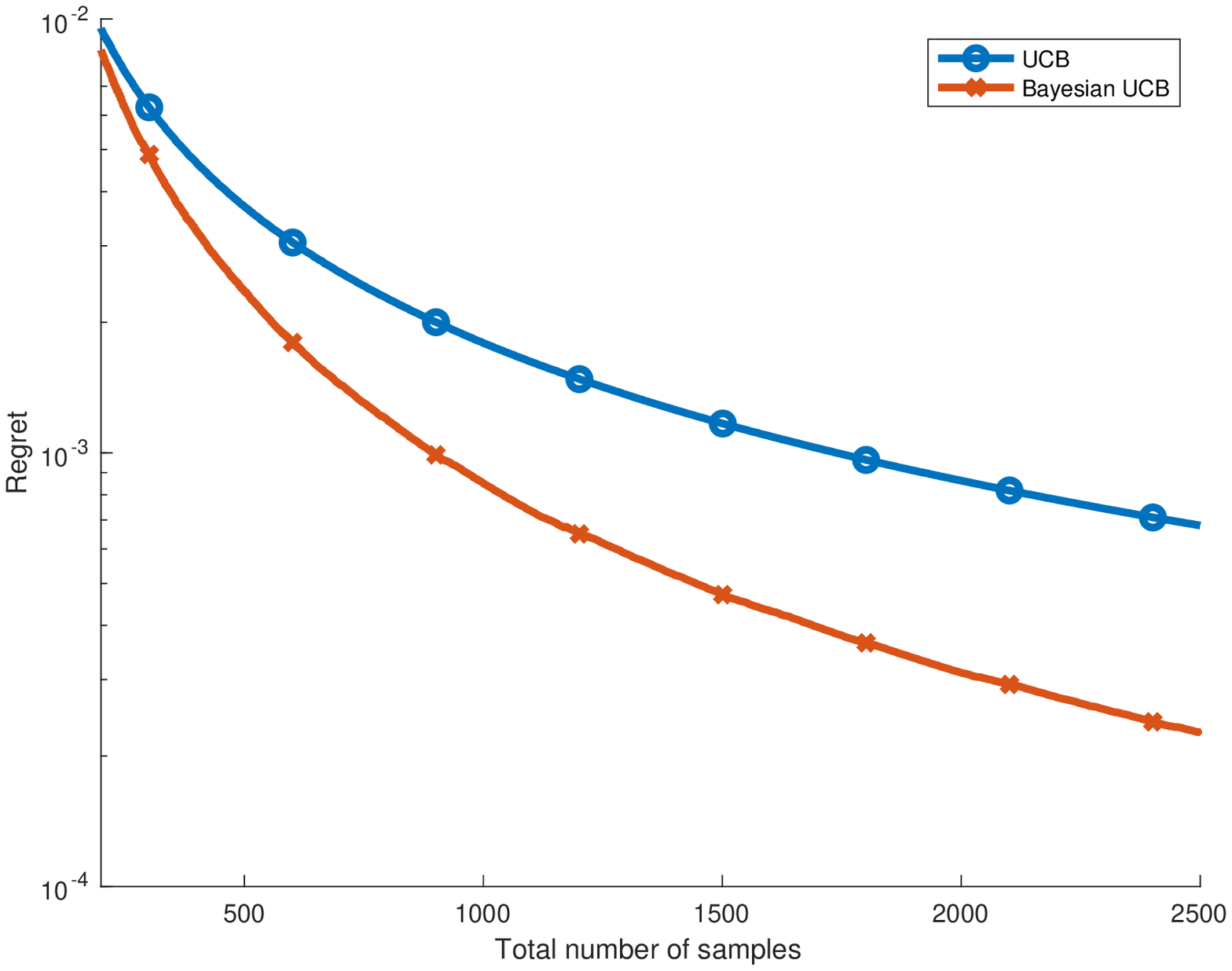}\label{regfig}}
     \subfigure[][b]{\includegraphics[width=0.32\textwidth]{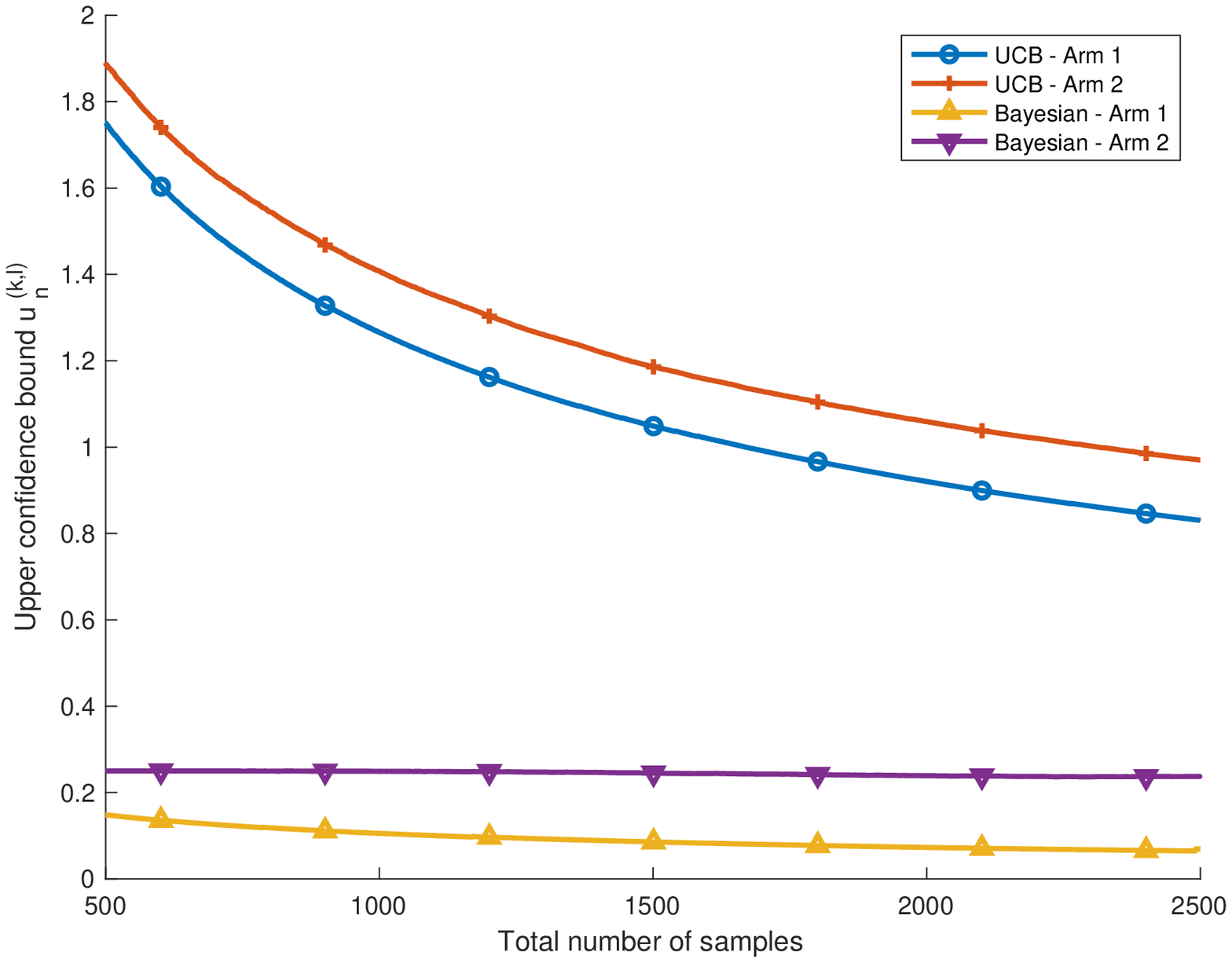}\label{uppfig}}
     \subfigure[][b]{\includegraphics[width=0.32\textwidth]{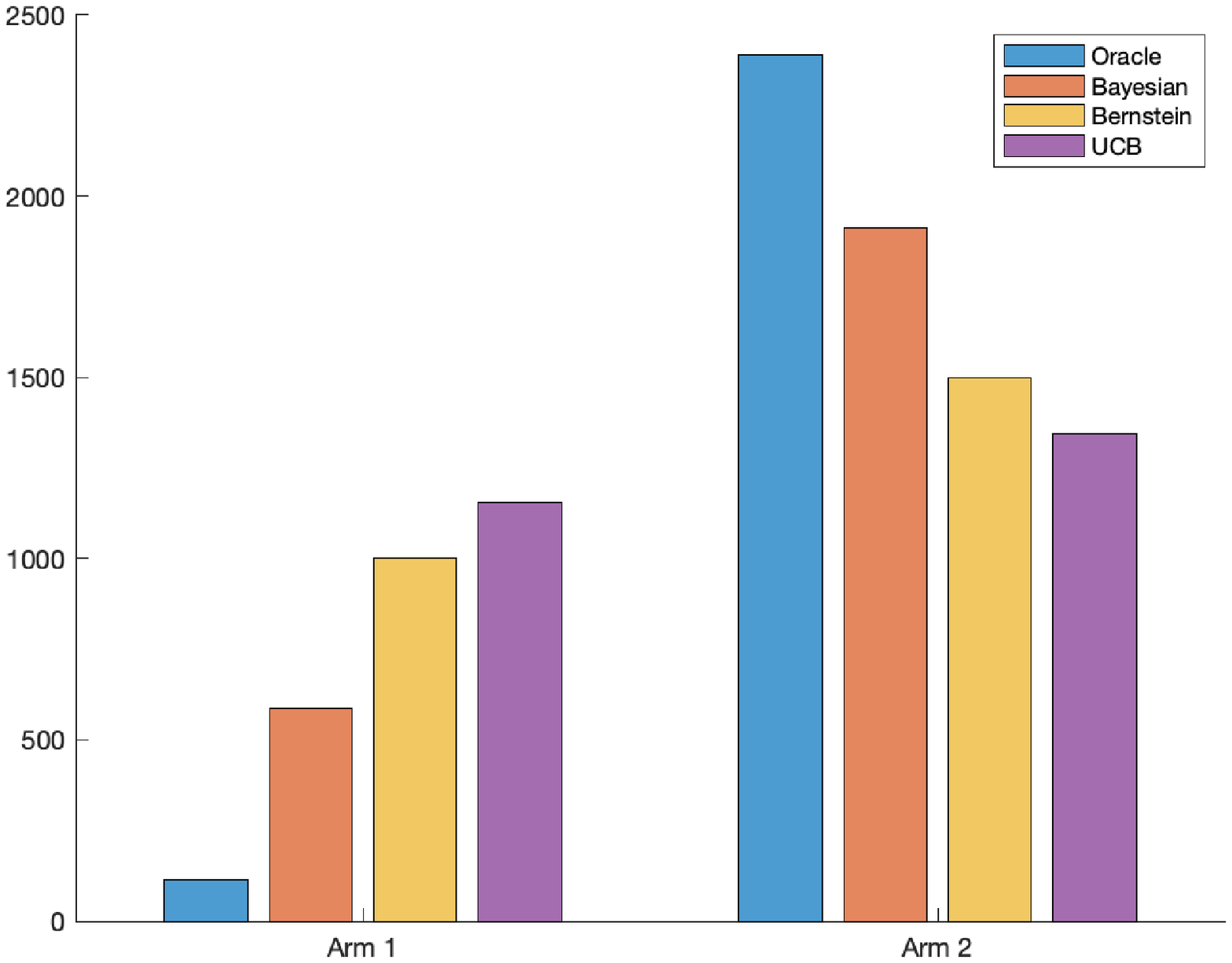}\label{samplefig}}
        \caption{The plot on the left depicts the regret $\mathcal{R}_N$ for the strategy in Algorithm \ref{algbasic} vs. the strategy in \cite{shekhar2020adaptive}. The plot in the middle depicts the upper confidence bounds $\s{u}{k}_n$. The plot on the right depicts the expected number of samples obtained and includes the empirical Bernstein approach in \cite{carpentier2011upper} as well.}
        \label{fig:three graphs}
\end{figure}

Recall that the main difference between the UCB and the Bayesian UCB strategies is the way the upper bound $\s{u}{k,l}_n$ is computed. In both these approaches, we observe that for every component, the value of the tracking function $\s{u}{k}_{N+1}/\s{T}{k}_{N+1}$ is nearly the same for all arms $k$.
\begin{proposition}\label{propobs}
For every arm $k$, we have
\begin{align}
    \frac{\sum_k \s{u}{k}_{N+1}}{N}\times\frac{1}{f_N} \leq \frac{\s{u}{k}_{N+1}}{\s{T}{k}_{N+1}} &\leq \frac{\sum_k \s{u}{k}_{N+1}}{N}\times f_N,
\end{align}
where $f_N = 1+o(1)$. Therefore, $\s{T}{k}_{N+1} \propto \s{u}{k}_{N+1}$, approximately.
\end{proposition}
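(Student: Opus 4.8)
The plan is to exploit the defining property of Algorithm~\ref{algbasic}: at each step $n$ we sample the arm maximizing $\varphi(\s{u}{k}_n,\s{T}{k}_n) = \s{u}{k}_n/\s{T}{k}_n$. I would first argue that, because the algorithm always feeds the next sample to the arm with the largest current value of $\s{u}{k}_n/\s{T}{k}_n$, the terminal ratios $\s{u}{k}_{N+1}/\s{T}{k}_{N+1}$ cannot differ across arms by more than a multiplicative factor controlled by how much a single increment to $\s{T}{k}_n$ can change the ratio, together with how much $\s{u}{k}_n$ can decrease between the last time arm $k$ was pulled and time $N+1$. Concretely, fix the arm $k^\star$ with the \emph{smallest} terminal ratio and an arbitrary arm $k$. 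Let $m$ be the last time $k$ was sampled (so $\s{T}{k}_{m+1} = \s{T}{k}_{N+1}$ and $\s{T}{k^\star}_m \le \s{T}{k^\star}_{N+1}$). Since the algorithm chose $k$ at time $m$, we have $\s{u}{k}_m/\s{T}{k}_m \ge \s{u}{k^\star}_m/\s{T}{k^\star}_m$.

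The second step is to convert this time-$m$ inequality into a statement about time $N+1$. Using monotonicity of $\s{u}{k}_n$ in $n$ (established right after \eqref{ukupp}) I can write $\s{u}{k}_{N+1} \le \s{u}{k}_m$, and using Lemma~\ref{l2bound} (and the fact that the true tracking parameter $\s{c}{k}$ is a fixed positive constant bounded away from $0$ on the $\varrho$-support) I can lower-bound $\s{u}{k}_m$ by $\s{u}{k}_{N+1}$ up to an additive term of order $\sqrt{\ln(1/\delta_m)/\s{T}{k}_m}$, i.e. $\s{u}{k}_m \le \s{u}{k}_{N+1}(1+o(1))$ once $\s{T}{k}_m$ grows. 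Similarly $\s{T}{k}_m = \s{T}{k}_{N+1} - 1$ and $\s{T}{k^\star}_m \le \s{T}{k^\star}_{N+1}$, so the ratios at time $m$ and time $N+1$ agree up to factors $1+o(1)$. Chaining these gives $\s{u}{k}_{N+1}/\s{T}{k}_{N+1} \ge (\s{u}{k^\star}_{N+1}/\s{T}{k^\star}_{N+1})(1+o(1))^{-1}$, i.e. every ratio is within a $1+o(1)$ factor of the minimum one. Averaging, the common value $(\sum_k \s{u}{k}_{N+1})/N$ (note $\sum_k \s{T}{k}_{N+1} = N+1 \approx N$) is sandwiched between $\min_k$ and $\max_k$ of the ratios, which are themselves within $f_N = 1+o(1)$ of each other; this yields both displayed inequalities with the stated $f_N$, and the proportionality $\s{T}{k}_{N+1} \propto \s{u}{k}_{N+1}$ is an immediate rearrangement.

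The main obstacle is making the ``$1+o(1)$'' bookkeeping honest in the regime where some arm $k$ is pulled only a small number of times. If $\s{T}{k}_{N+1}$ stays $O(1)$ as $N\to\infty$, then the additive slack $\sqrt{\ln(1/\delta_m)/\s{T}{k}_m}$ from Lemma~\ref{l2bound} does \emph{not} become negligible relative to $\s{u}{k}_{N+1}$, and the multiplicative claim can fail. I would handle this exactly as in the regret analysis underlying Theorem~\ref{mainthm}: one shows that under event $\mathcal{E}$ every arm is sampled $\Omega(N)$ times (the oracle allocation $\s{T}{k}_* = \s{c}{k} N/\sum_i \s{c}{i}$ is linear in $N$ and the UCB rule tracks it up to the lower-order regret term), so in fact $\s{T}{k}_m \to \infty$ for all $k$ and all the error terms are genuinely $o(1)$. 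The remaining steps — monotonicity of $\s{u}{k}_n$, the greedy inequality at the last pull time, and the averaging sandwich — are routine once this lower bound on the pull counts is in hand.
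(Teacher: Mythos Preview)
The paper does not supply a proof of Proposition~\ref{propobs}; it is stated in the numerical-results section as an observation, with no accompanying argument or appendix reference. Your proposed route is therefore the only proof on the table, and its high-level structure---greedy selection forces the tracking values $\varphi(\s{u}{k}_n,\s{T}{k}_n)$ to equalize up to one-step corrections and drift of $\s{u}{k}_n$ controlled by Lemma~\ref{l2bound}, combined with the $\Omega(N)$ pull-count lower bound implicit in the regret analysis---is the natural one.

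There is, however, a genuine gap: your chaining step is set up in the wrong direction and, as written, proves a vacuous inequality. You fix $k^\star$ to be the arm with the \emph{smallest} terminal ratio, take $m$ to be the last pull time of an \emph{arbitrary} arm $k$, and conclude
\[
\frac{\s{u}{k}_{N+1}}{\s{T}{k}_{N+1}} \;\ge\; (1+o(1))^{-1}\,\frac{\s{u}{k^\star}_{N+1}}{\s{T}{k^\star}_{N+1}}.
\]
But every ratio is already at least the minimum by definition, so this yields nothing. What is needed is the reverse bound: every ratio is at most $(1+o(1))$ times the minimum. To obtain it, take $m$ to be the last pull time of $k^\star$ itself. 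At that time $\s{u}{k^\star}_m/\s{T}{k^\star}_m \ge \s{u}{i}_m/\s{T}{i}_m$ for every $i$ (since $k^\star$ was selected). Because each arm's ratio $\s{u}{i}_n/\s{T}{i}_n$ is nonincreasing in $n$, this gives $\s{u}{i}_{N+1}/\s{T}{i}_{N+1} \le \s{u}{k^\star}_m/\s{T}{k^\star}_m$ for all $i$. Finally, since $k^\star$ is not pulled after $m$, one has $\s{T}{k^\star}_{N+1}=\s{T}{k^\star}_m+1$ and (under $\mathcal{E}$) $\s{u}{k^\star}_{N+1}\ge \s{c}{k^\star}>0$, so the same Lemma~\ref{l2bound} bookkeeping you describe gives $\s{u}{k^\star}_m/\s{T}{k^\star}_m \le (1+o(1))\,\s{u}{k^\star}_{N+1}/\s{T}{k^\star}_{N+1}$. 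With this one swap---looking at the last pull of the minimum-ratio arm rather than of an arbitrary arm---your averaging/sandwich step and the $\Omega(N)$ pull-count argument go through unchanged.
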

In Figure \ref{uppfig}, we plot a typical realization of the upper bounds $\s{u}{k}_n$ as a function of time $n$. Notice that the bounds computed in our approach are significantly tighter. This tightness can be attributed to two factors: (i) confidence intervals on the probabilities $\s{p}{k,l}$ computed using the inverse cdf of the posterior distribution, and (ii) the quadratic program for computing $\s{u}{k}_n$ in \eqref{ukupp}. The bounds computed using the method in \cite{shekhar2020adaptive} do capture the \emph{difference} between the tracking parameters for arms 1 and 2. However, according to Proposition \ref{propobs}, it is the \emph{ratio} between the upper bounds that matters. The ratio between the upper bounds in \cite{shekhar2020adaptive} is close to 1 even for $N=2500$. While this ratio will approach $\s{c}{1}/\s{c}{2}$ eventually, it is quite slow in practice.
While our Bayesian approach provides tighter bounds, there is a computational cost associated with it, especially because of the diminishing probability $\delta_n$. For the binary support case, there are simpler closed form approximations \cite{bayarri2004interplay,brown2001interval} that can be used to compute $\s{a}{k,l}_n$ and $\s{b}{k,l}_n$.

We propose to use our sampling strategy for tracking SARS-CoV-2 seroprevalence in populations classified on the basis of location, ethnicity, age etc. Each category of interest can be modeled as an arm with a Bernoulli pmf, where the probability $\s{p}{k,2}$ represents the positivity associated with category $k$. We use data from a randomized seroprevalence survey conducted in Los Angeles county \cite{sood2020seroprevalence}. The plots in Figures \ref{racefig} and \ref{regionfig} depict the gap between optimal sample allocation and the random allocation strategy used by the survey. This gap illustrates the importance of adaptive randomized sampling for obtaining uniform estimates of seroprevalence. Without careful sampling, we can have bad estimates for certain groups that may be particularly vulnerable.
\begin{figure}
     \centering
     \subfigure[][t]{\includegraphics[width=0.49\textwidth]{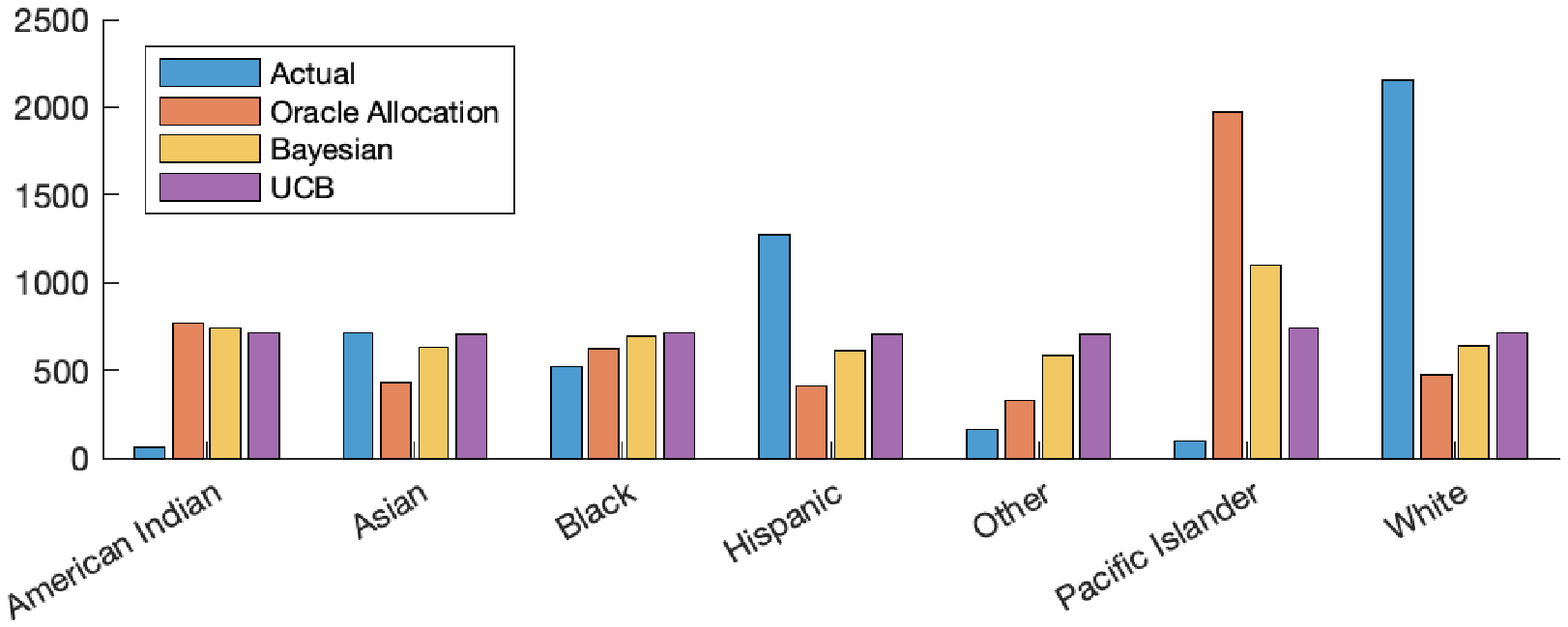}\label{racefig}}
     \subfigure[][t]{\includegraphics[width=0.49\textwidth]{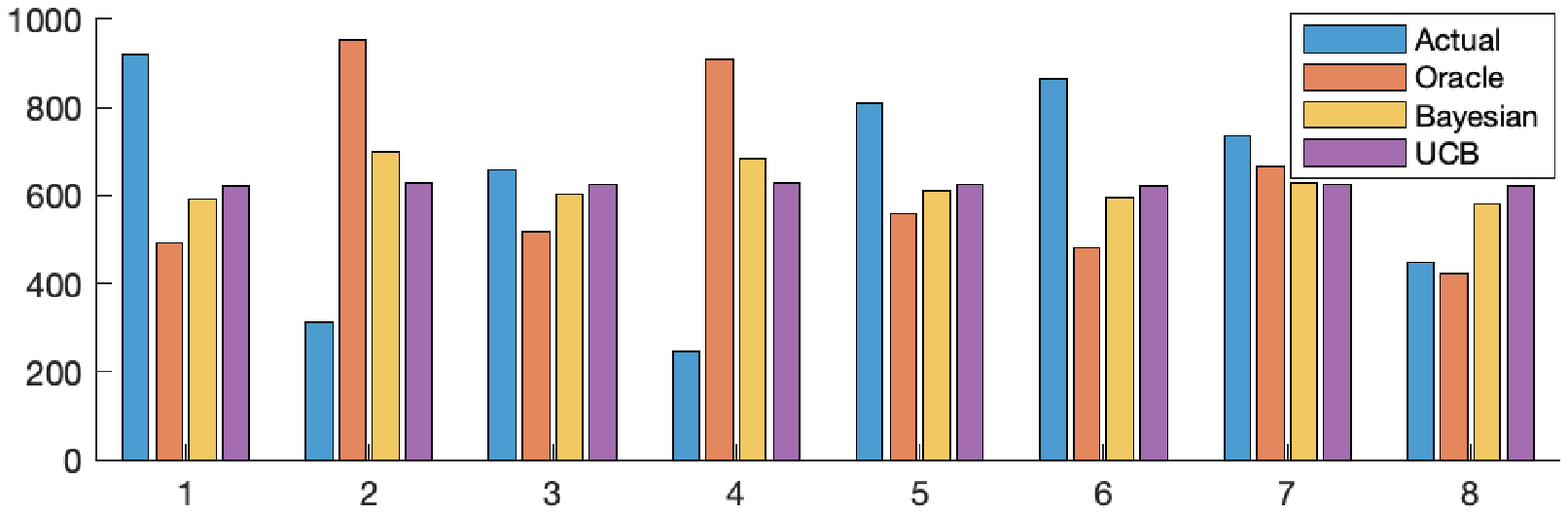}\label{regionfig}}
        \caption{Population classified on the basis of ethnicity and location, respectively in plots \ref{racefig} and \ref{regionfig}. The plots depict the number of sampled obtained in each category under each strategy. Here, \emph{Actual} refers to the number of samples obtained by the survey \cite{sood2020seroprevalence}}
        \label{fig:three graphs}
\end{figure}

\begin{remark}
For estimating SARS-CoV-2 seroprevalence, there are some practical aspects that require a slightly different formulation and sampling strategy. We provide a heuristic approach for addressing these issues and the analysis of this approach is a problem for future work.
\end{remark}
\section{Conclusions and Future Work}
For the problem of uniformly estimating pmfs in a mean squared sense, a UCB sampling approach is proposed. Unlike the state-of-the-art approach in \cite{shekhar2020adaptive}, a Dirichlet prior is assumed on the pmfs and the upper confidence bounds are computed based on the posterior. It is analytically shown that the performance of this Bayesian approach can be now worse than the approach in \cite{shekhar2020adaptive}, and has noticeably better performance in practice. We discuss the potential application of the sampling approach for SARS-CoV-2 seroprevalence and incidence rate estimation.

There are two potential directions for future work. One is to incorporate temporal variation in the underlying pmfs. This is particularly important in studying SARS-CoV-2 incidence rates. The other is to incorporate a prior on the collection pmfs so that the correlation between them can be exploited to improve our estimates and the sampling strategy. For both these directions, a Bayesian model can be more helpful.





\appendix

\section{Proof of Lemma \ref{subgauss}}\label{subgaussproof}
A Beta distribution with parameters $\alpha,\beta$ is sub-Gaussian with parameter $\frac{1}{4(\alpha+\beta+1)}$ \cite{marchal2017sub}. And because of \eqref{betadis} we can conclude using the Chernoff bound \cite{ross2014introduction} that
\begin{align}
    \Py^{\rho_n}\left[\s{p}{k,l} > \frac{\s{\alpha}{k,l}_n}{\s{\alpha}{k,0}_n} + \sqrt{\frac{\ln\frac{2}{\delta_n}}{2(\s{\alpha}{k,0}_n+1)}}\right] &\leq \delta_n\\
    \Py^{\rho_n}\left[\s{p}{k,l} < \frac{\s{\alpha}{k,l}_n}{\s{\alpha}{k,0}_n} - \sqrt{\frac{\ln\frac{2}{\delta_n}}{2(\s{\alpha}{k,0}_n+1)}}\right] &\leq \delta_n.
\end{align}
Since $\s{a}{k,l}_n$ and $\s{b}{k,l}_n$ are selected using the inverse cdf of the posterior distribution, we have
\begin{align}
    \s{b}{k,l}_n - \s{a}{k,l}_n \leq 2\sqrt{\frac{\ln\frac{2}{\delta_n}}{2(\s{\alpha}{k,0}_n+1)}}.
\end{align}
The second inequality in the lemma simply follows from the fact the $\s{\alpha}{k,0}_n \geq \s{T}{k}_n$.

\section{Proof of Lemma \ref{l2bound}}\label{l2boundproof}
Under event $\mathcal{E}$, we have
\begin{align}
    &|\s{u}{k}_n - \s{c}{k}|\\
    &= \left|\sum_{l} \left((\s{p}{k,l})^2 -(\s{q}{k,l}_n)^2  - \s{p}{k,l} + \s{q}{k,l}_n\right) \right|\\
    &\stackrel{a}{\leq} \sum_{l}\left(\left| (\s{p}{k,l})^2 -(\s{q}{k,l}_n)^2\right|\right)\\
    &=\sum_{l}\left| \s{p}{k,l} -\s{q}{k,l}_n  \right|\left(\s{p}{k,l} +\s{q}{k,l}_n\right)\\
    &\leq \sum_{l}\mathrm{len}\left( \s{E}{k,l}_n  \right)\left(\s{p}{k,l} +\s{q}{k,l}_n \right)\\
    &\stackrel{b}{\leq} 2\sqrt{\frac{\ln{\frac{2}{\delta_n}}}{2(\s{T}{k}_n+1)}}\sum_{l}\left(\s{p}{k,l} +\s{q}{k,l}_n \right)\\
    &= 4\sqrt{\frac{\ln{\frac{2}{\delta_n}}}{2(\s{T}{k}_n+1)}}.
\end{align}
Here, inequality in $(a)$ is because of the triangle inequality and the inequality in $(b)$ is due to Lemma \ref{subgauss}.

\section{Proof of Theorem \ref{mainthm}}\label{mainthmproof}
We have
\begin{align}
    \delta \geq \Py^{\rho_1}[\mathcal{E}^c] &= \E^{\rho_1}\left[\delta_p\right]\\
    &\geq \E^{\rho_1}\left[\delta_p\mathbbm{1}_{\eta(p)}(p)\right]\\
    &= \E^{\varrho}\left[\delta_p\right]\eta,\label{measurechange}
\end{align}
where $\delta_p \doteq \Py^g[\mathcal{E}^c\mid p]$. Note that the last equality in the display above follows from a simple change of measure argument.
Since $\delta_p$ is the probability of the unfavorable event $\mathcal{E}^c$ conditioned on the model $p$, we can obtain a regret bound in terms of $\delta_p$ without local averaging using the same arguments as in the proofs of Lemma 1 and Theorem 1 in \cite[Appendix D.3]{shekhar2020adaptive}. Thus, we have
\begin{align}
    \s{\mathcal{L}}{k}_N(&\s{p}{k},g)-\varphi^*(p,N)\\
    &\leq \frac{(K+5)LM}{(\lambda_{\mathrm{min}})^2N^{3/2}} + \frac{\delta_p}{(\lambda_{\mathrm{min}})^2N}\left(1+6M\sqrt{N}\right)+ \frac{6(K-1)M^2}{(\lambda_{\mathrm{min}})^3N^2} + L\delta_p,
\end{align}
where
\begin{align}
    M &\doteq \frac{\lambda_{\mathrm{max}}\sqrt{8\log(2/\delta_N)}}{\lambda_{\mathrm{min}}C} = \mathcal{O}\left(\sqrt{\log(\eta N)}\right)\\
    \lambda_{\mathrm{max}} &\doteq \min_{k,\pi\in\eta(
    p)}\frac{\s{c}{k}}{\sum_i\s{c}{i}};
    \quad\lambda_{\mathrm{max}}\doteq\max_{k,\pi\in\eta(
    p)}\frac{\s{c}{k}}{\sum_i\s{c}{i}}\\
    C &\doteq \sum_i\s{c}{i},
\end{align}
and $\s{c}{i} = \sum_{l}\s{\pi}{i,l}(1-\s{\pi}{i,l})$. Using \eqref{measurechange}, we can then locally average the regret as
\begin{align}
    &\E^{{\varrho}}\left[\s{\mathcal{L}}{k}_N(\s{\pi}{k},g)-\varphi^*(p,N)\right]\\
    &\leq \frac{(K+5)LM}{(\lambda_{\mathrm{min}})^2N^{3/2}} + \frac{\delta}{(\eta\lambda_{\mathrm{min}})^2N}\left(1+6M\sqrt{N}\right)+ \frac{6(K-1)M^2}{(\lambda_{\mathrm{min}})^3N^2} + L\frac{\delta}{\eta}\\
    &= \mathcal{O}(\log(\eta) N^{-3/2}).
\end{align}



\section{Incorporating Priors}\label{priorappend}
In this section, we will consider a scenario where we are provided with a prior $\rho_1$ on $p$. We will first modify the distribution $\varrho$ used for local averaging as
\begin{align}
    \label{varrhodef}{\varrho}(\mathcal{A}) = \rho_1\left(\mathcal{A}\cap\eta({p})\right)/\eta,
\end{align}
for any Borel measurable set $\mathcal{A} \subseteq \Delta^K$. Here, $\eta(p)$ is a ball around $p$ with $\rho_1(\eta(p)) = \eta$.

\paragraph{Dirichlet priors}When the prior $\rho_1$ is any factored Dirichlet distribution (\emph{i.e}. not necessarily uniform) of the form in \eqref{prior}, one can easily extend all the results with the modified $\varrho$.
\paragraph{Intervals for Bernoulli parameters}Another scenario of interest is when $L=2$ and we are given that the Bernoulli parameter associated with $\s{p}{k}$ lies in some interval $[\s{\gamma}{k}_l,\s{\gamma}{k}_u]$. We can then assume that the Bernoulli parameter $\s{p}{k}$ is distributed uniformly and independently over the interval $[\s{\gamma}{k}_l,\s{\gamma}{k}_u]$. Since this prior does not correspond to a Beta distribution, the posterior belief may not be a Beta distribution. However, we still can easily find the cdf of the posteriors in this case because the posterior on the parameters is just going to be a truncated Beta distribution. Let $\s{F}{k}_n$ be the cdf of the posterior at time $n$ with a uniform prior and let $\s{\tilde{F}}{k}_n$ be the posterior with the truncated prior. Then one can show that
\begin{align}
    \s{\tilde{F}}{k}_n(x) = \frac{\s{F}{k}_n(x)-\s{F}{k}_n(\s{\gamma}{k}_l)}{\s{F}{k}_n(\s{\gamma}{k}_u)-\s{F}{k}_n(\s{\gamma}{k}_l)}.
\end{align}
The confidence intervals in \eqref{upp} and \eqref{low} can then be computed using the inverse of $\s{\tilde{F}}{k}_n$.

\section{Practical Considerations: Overall Estimate and Batch Sampling}
In seroprevalence estimation, we generally allocate samples in batches of size $B$. Also, we are generally interested in estimating the positivity of each category \emph{as well as} the positivity in the overall population. Let the fraction of individuals of category $k$ in the overall population be $\s{w}{k}$. Then the overall positivity $r$ and its estimate $\hat{r}$ are given by
\begin{align}
    r &= \sum_{k}\s{w}{k}\s{p}{k}\\
    \hat{r}_N &= \sum_{k}\s{w}{k}\s{\hat{p}}{k}_N.
\end{align}
The mean squared error between $r$ and $\hat{r}_N$ is given by
\begin{align}
    \mathrm{MSE}(r,\hat{r}_N) = \sum_k\frac{(\s{w}{k})^2\s{c}{k}}{\s{T}{k}}.
\end{align}
If the mean squared error associated with $r$ is not considered, then it may so happen that a tiny group (small $w_k$) with high positivity will be allocated too many samples. The contribution of this small group to the overall estimate would be small and thus, allocating too many samples to it could compromise the quality of the overall estimate $r$. Therefore, we need to determine an allocation that accounts for the quality of the overall estimate $r$ as well.

A suitable way to formalize this notion is to pose the following constraints on the oracle allocation $\s{T}{k}$
\begin{align}
    \tag{C1}\label{c1}&\frac{\s{c}{k}}{\s{T}{k}}\leq \s{\theta}{k},\; k =1,\dots,K\\
    & \sum_k \frac{(\s{w}{k})^2\s{c}{k}}{\s{T}{k}} \leq \s{\theta}{0}\\
    & \sum_{k}\s{T}{k} \leq N\\
    & \s{T}{k} \geq 0,\; k =1,\dots,K,
\end{align}
where $\s{\theta}{k}$, $k=0,\dots,K$ are predetermined constants. A solution to the above set of constraints can be obtained by solving the following optimization problem
\begin{align}
    \tag{C2}\label{c2}\min_{\s{T}{1},\dots,\s{T}{K}} \quad & \max\left\{\max_k\left\{\frac{\s{c}{k}}{\s{\theta}{k}\s{T}{k}}\right\},\sum_k \frac{(\s{w}{k})^2\s{c}{k}}{\s{\theta}{0}\s{T}{k}} \right\}\\
\mathrm{s.t.} \quad & \sum_{k}\s{T}{k} \leq N\\
    & \s{T}{k} \geq 0,\; k =1,\dots,K.
\end{align}
The Problem \eqref{c1} is feasible if and only if the optimum value of the optimization problem above is less than or equal to 1. In that case, the solution to Problem \eqref{c2} is a solution to Problem \eqref{c1}. Notice that Problem \eqref{c1} is very similar to Problem \eqref{p} except for the additional mean squared error term associated with the overall estimate $r$. Because of this distinction, it is not clear whether one can view Problem \eqref{c2} as a particular instance of Problem \eqref{p} with appropriate modifications and simply apply the adaptive sampling strategy in Section \ref{stratsec} to the modified problem. Nonetheless, we provide a similar heuristic sampling approach that tracks the oracle quite well (See Figure \ref{fig:jointplot}). At each time time $n$, the heuristic is to allocate samples within each batch according to the solution of the following optimization problem
\begin{align}
    \tag{C3}\label{c3}\min_{\s{\tau}{1},\dots,\s{\tau}{K},\lambda} \quad & \lambda\\
\mathrm{s.t.} \quad & \frac{u_n^{(k)}}{T_{n}^{(k)}+\s{\tau}{k}} \leq \s{\theta}{k}\lambda\\
& \sum_k \frac{(\s{w}{k})^2u_n^{(k)}}{T_{n}^{(k)}+\s{\tau}{k}} \leq \s{\theta}{0}\lambda\\
&\sum_{k}\s{\tau}{k} = B\\
& \s{\tau}{k} \geq 0 \; \forall k.
\end{align}
\begin{figure}
    \centering
    \includegraphics[width = 0.5\textwidth]{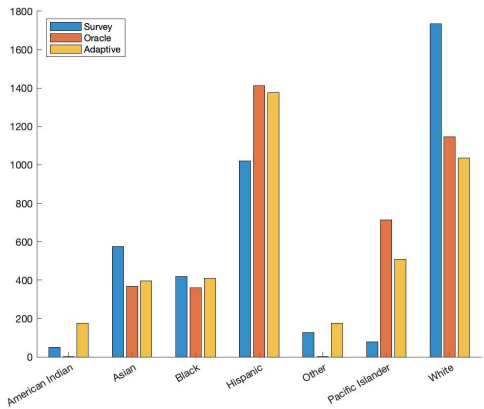}
    \caption{This plot represents the number of samples collected by the seroprevalence survey in \cite{sood2020seroprevalence}, the oracle allocation in Problem \eqref{c1} with appropriate constants $\s{\theta}{k}$, and the allocation by the heuristic (denoted by Adaptive). Notice that the heuristic tracks the oracle closely. This plot tells us that in order get a good overall estimate, we should allocate fewer samples to an underrepresented group like the Pacific Islanders ($\s{w}{k} \approx 0.003$) than the number suggested by the oracle \eqref{p} (See \ref{racefig}). However, it also tells us that we can allocate substantially more samples than the number in the survey ($\approx \s{w}{k}N$) to this group for a better estimate of their positivity without comprising the quality of the overall estimate.}
    \label{fig:jointplot}
\end{figure}

\acks{This research was supported, in part, by National Science Foundation under Grant NSF CCF-1817200, CCF-1718560, CPS-1446901, Grant ONR N00014-15-1-2550, and Grant ARO {W911NF1910269}.}
\bibliography{refs.bib}

\end{document}